%% file: sn-article.tex
\documentclass[pdflatex,sn-aps]{sn-jnl}

\usepackage{graphicx}%
\usepackage{multirow}%
\usepackage{amsmath,amssymb,amsfonts}%
\usepackage{amsthm}%
\usepackage{mathrsfs}%
\usepackage[title]{appendix}%
\usepackage{xcolor}%
\usepackage{textcomp}%
\usepackage{manyfoot}%
\usepackage{booktabs}%
\usepackage{algorithm}%
\usepackage{algorithmicx}%
\usepackage{algpseudocode}%
\usepackage{listings}%

\usepackage[whole]{bxcjkjatype}
\usepackage{dcolumn}
\usepackage{bm}
\usepackage{amsthm} 
\usepackage{comment}

\newcommand{\red}{\color{red}}

\theoremstyle{plain}
\newtheorem{thm}{Theorem}
\newtheorem*{thm*}{Theorem}

\theoremstyle{definition}

\begin{document}

\title[Article Title]{
One-shot prediction of noise-induced bifurcations with reservoir computing
}


\author*[1]{\fnm{Nozomi} \sur{Akashi}}\email{akashi.nozomi.2a@kyoto-u.ac.jp}

\author[2]{\fnm{Takayuki} \sur{Watanabe}}

\author[3]{\fnm{Masato} \sur{Hara}}

\author[4]{\fnm{Takao} \sur{Namiki}}

\author[3]{\fnm{Hiroshi} \sur{Kokubu}}

\author[5]{\fnm{Ichiro} \sur{Tsuda}}

\author[6]{\fnm{Kohei} \sur{Nakajima}}

\affil*[1]{\orgdiv{Graduate School of Informatics}, \orgname{Kyoto University}, \orgaddress{\street{Yoshida-honmachi}, \city{Sakyo-ku}, \postcode{606-8501}, \state{Kyoto}, \country{Japan}}}
\affil[2]{\orgdiv{College of Science and Engineering}, \orgname{Chubu University}, \orgaddress{\street{1200 Matsumoto-cho}, \city{Kasugai}, \postcode{487-8501}, \state{Aichi}, \country{Japan}}}
\affil[3]{\orgdiv{Department of Mathematics}, \orgname{Kyoto University}, \orgaddress{\street{Kitashirakawa Oiwake-cho}, \city{Sakyo-ku}, \postcode{606-8502}, \state{Kyoto}, \country{Japan}}}
\affil[4]{\orgdiv{Faculty of Science}, \orgname{Hokkaido University}, \orgaddress{\street{Kita-10 Nishi-8, Kita-ku}, \city{Sapporo}, \postcode{060-0810}, \state{Hokkaido}, \country{Japan}}}
\affil[5]{\orgdiv{AIT Center}, \orgname{Sapporo City University}, \orgaddress{\street{1 Geijutsu-no-Mori, Minami-ku}, \city{Sapporo}, \postcode{005-0864}, \state{Hokkaido}, \country{Japan}}}
\affil[6]{\orgdiv{Graduate School of Information Science and Technology}, \orgname{The University of Tokyo}, \orgaddress{\street{7-3-1 Hongo}, \city{Bunkyo-ku}, \postcode{113-8654}, \state{Tokyo}, \country{Japan}}}


\input{section/0_abstract}

\keywords{Noise-induced phenomena, Noise-induced bifurcations, Noise cancellation, Reservoir computing, Spintronics}



\maketitle

\input{section/1_introduction}
\input{section/2_methods}
\input{section/3_experimentalResults}

\input{section/6_spin}

\input{section/7_conclusion}

\backmatter

\bmhead{Acknowledgements}
Nozomi Akashi, Takayuki Watanabe, and Ichiro Tsuda were supported by JSPS KAKENHI Grant No. 25K00011.
Mr. Sora Todaka assisted with parameter exploration for numerical experiments.

\begin{appendices}

\input{section/A2_benchmarks}
\input{section/A2_experimentalSetting}
\input{section/A1_proof}
\input{section/A3_spinParameter}

\bmhead{Supplementary information}

\input{section/S1_spin_PARC}




\end{appendices}


\bibliography{bib/reservoir, bib/physicalReservoirComputing, bib/neuralnetwork, bib/spintronics, bib/nonlinearDynamics}

\end{document}

%% file: section/0_abstract.tex
\abstract{
Dynamical systems can exhibit complex responses when noise is injected. 
In particular, dynamics can be qualitatively altered by dynamic noise, a phenomenon known as noise-induced bifurcation.
Predicting noise-induced bifurcations is a critical challenge in nonlinear physics.
Recently, it has been reported that reservoir computing, a machine learning framework, can reconstruct the unseen global structure of a dynamical system, including bifurcations, from limited time series data.
However, learning global structures in random dynamical systems has not yet been systematically addressed.
In this study, we report that a simple reservoir computing framework can predict the noise-induced bifurcation structure from the time series at a single noise condition.
We demonstrate dynamic noise cancellation and the reconstruction of entire noise-induced bifurcation structures, including noise-induced chaos and noise-induced order, in representative dynamical systems.
Additionally, we provide a theoretical explanation for noise cancellation and demonstrate noise cancellation of a neuromorphic spintronics device.
Our results provide significant insights into understanding and harnessing real-world noisy complex dynamics.
}

%% file: section/1_introduction.tex
\section{\label{sec:intro}Introduction}
Physics is a discipline that seeks to uncover laws on the basis of observation.
The behavior of systems exposed to inevitable dynamical noise can sometimes differ markedly from that of the underlying intrinsic dynamics. 
Such effects are known as noise-induced phenomena, and include noise-induced bifurcation, in which the stationary state of a system changes~\cite{crutchfield1982fluctuations, matsumoto1983noise}, as well as noise-induced synchronization~\cite{toral2001analytical}, noise-induced transitions~\cite{horsthemke1984noise}, and stochastic resonance~\cite{gammaitoni1998stochastic}.
These phenomena have been demonstrated in simple mathematical models and have also been widely observed in real-world systems~\cite{matsumoto1983noise, horsthemke1984noise, gammaitoni1998stochastic, akashi2020input-driven}.
Accurately predicting the onset of such noise-induced phenomena, as well as suppressing them, has long been a major challenge in nonlinear physics. 
In real-world signals, such as electroencephalogram data, clean noise-free reference data are often unavailable. 
When noise-induced bifurcations occur, the properties of the observed signals may differ substantially from those of the original signals; for example, the dynamics may shift from chaos to order.
Therefore, noise cancellation and prediction of noise-induced bifurcations from a single noise condition can be regarded as important technological milestones for both the understanding of complex real-world dynamics and those engineering applications.

In recent years, data-driven approaches have shown considerable promise for learning dynamical systems. 
Among them, reservoir computing~\cite{jaeger2001echo, maass2002real, nakajima2021reservoir} has proven particularly effective for predicting complex trajectories, including chaotic dynamics~\cite{pathak2018model, pathak2018hybrid, vlachas2020backpropagation}. 
A key feature of reservoir computing is its simple training scheme, in which only the readout layer is optimized.
Beyond trajectory prediction, reservoir computing has been shown both experimentally and theoretically to capture structural properties of the dynamical system underlying the data.
Experiments have demonstrated that trained reservoirs can reproduce invariant features such as invariant distribution and Lyapunov spectra of attractors~\cite{pathak2017using, lu2018attractor}, while theory has related the learned reservoir dynamics to the original system through conjugacy-like constructions~\cite{hara2024learning}. 
Moreover, reservoir computing can, in some cases, infer aspects of the global structure of a dynamical system that are not explicitly represented in the training data.
In particular, reservoirs have been reported to predict bifurcation points and post-bifurcation dynamics from data acquired only in the pre-bifurcation regime~\cite{itoh2017reconstructing, itoh2020reconstructing, kim2021teaching, hara2024learning, kong2023reservoir, tokuda_prediction_2024, tadokoro2024trans, shen2026predicting}. 

Reservoir computing has also been used to learn noisy time series~\cite{semenova2019fundamental, estebanes2019constructive, itoh2020reconstructing, wikner2024stabilizing, luo2024reconstructing}. 
Recent studies have shown that it can be used for denoising~\cite{nathe2023reservoir, Sedehi2025denoising} and noise separation~\cite{choi2025signal}.
However, most of these approaches face practical limitations, as they require clean, noise-free data for training or assume observational noise that does not alter the intrinsic properties of the underlying dynamical system. 
Another study~\cite{lin_learning_2024} shows that a reservoir with Sparse Identification of Nonlinear Dynamics (SINDy)~\cite{brunton2016discovering} can reconstruct noise-induced transitions, which occur when noise drives trajectories across basin boundaries separating attractors that already exist in the noise-free system.
Noise-induced bifurcations, by contrast, require the inference of more global phase space structure, because they involve the emergence of attractors that are absent without noise.
Taken together, these studies suggest that the concise time-series learning framework of reservoir computing enables reconstruction of both denoised dynamics and bifurcation structure from data corrupted by dynamic noise, even in the presence of noise-induced bifurcations.

Here, we show that, by leveraging the ability of reservoir computing to reconstruct the underlying structure of time series data, noise-induced bifurcations can be predicted in a one-shot manner from training data acquired under a single noise condition. 
In our approach, time series corrupted by dynamic noise are used for training, and the system's behavior is then predicted as the noise intensity is varied. 
Using representative dynamical systems known to exhibit noise-induced bifurcations, we numerically demonstrate that this framework enables both reconstruction of the underlying noise-free dynamics and prediction of the associated bifurcation structure.
We further clarify theoretically the conditions under which such dynamic noise cancellation succeeds.
Finally, as an application to a real physical system exhibiting noise-induced phenomena, we demonstrate dynamic noise cancellation for noise-induced bifurcations in a spin-torque oscillator model~\cite{akashi2020input-driven}, a promising device for neuromorphic computing. 
These results open a route to predicting, understanding, and harnessing complex dynamical responses induced by real-world noise.

%% file: section/2_methods.tex
\section{Proposed Framework}

We consider random dynamical systems driven by noise of the form
\begin{eqnarray}
    \label{eq:generalNoise}
    {\bm y}_{\varepsilon}(t+1) = {\bm f}({\bm y}_{\varepsilon}(t), \varepsilon {\bm v}(t+1)),
\end{eqnarray}
where ${\bm y}_{\varepsilon}(t) \in \mathbb{R}^n$ denotes the state of the system under noise intensity $\varepsilon > 0$, and ${\bm v}(t) \in \mathbb{R}^n$ denotes a noise sequence drawn from a given probability distribution. 
We also focus on the special case of additive noise, given by
\begin{eqnarray}
    \label{eq:additiveNoise}
    {\bm y}_{\varepsilon}(t+1) = {\bm f}({\bm y}_{\varepsilon}(t)) + \varepsilon {\bm v}(t+1).
\end{eqnarray}

\begin{figure*}[htbp]
\begin{center}
\includegraphics[width=\linewidth]{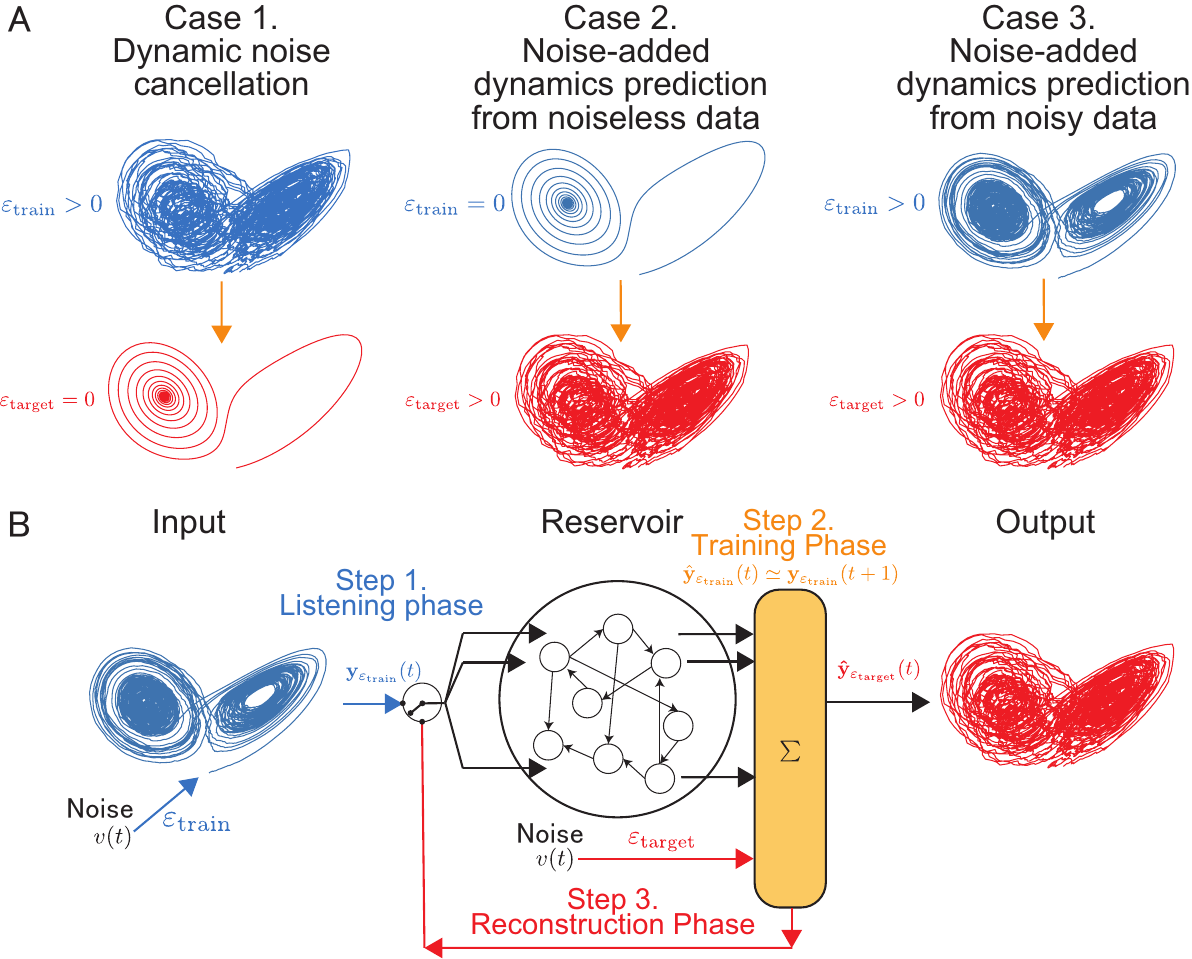}
\caption{\label{fig:schematics}
Schematic illustration of the proposed framework.
(A) Classification of its use cases.
(B) Overview of the framework and its procedure.
}
\end{center}
\end{figure*}

We next describe the scope of the proposed method. In Eqs.~(\ref{eq:generalNoise}) and (\ref{eq:additiveNoise}), we denote by $\varepsilon_{\rm train}$ the noise intensity in the observed data used for training, and by $\varepsilon_{\rm target}$ the noise intensity in the dynamics to be reconstructed.
Depending on the combination of $\varepsilon_{\rm train}$ and $\varepsilon_{\rm target}$, the use cases of the method can be classified into the following four categories, as shown in Fig.~\ref{fig:schematics}A.
\begin{enumerate}
    \setcounter{enumi}{-1}
    \item \textbf{Case 0. Attractor reconstruction:} $\varepsilon_{\rm train}=0$, $\varepsilon_{\rm target}=0$.  
    This corresponds to standard attractor reconstruction in reservoir computing~\cite{jaeger2001echo, verstraeten2007experimental, pathak2017using, lu2018attractor, hara2024learning}. The aim is to reconstruct the stationary dynamics represented in the training data.

    \item \textbf{Case 1. Dynamic noise cancellation:} $\varepsilon_{\rm train}>0$, $\varepsilon_{\rm target}=0$.  
    The aim is to reconstruct the noiseless dynamics from training time series corrupted by dynamic noise.

    \item \textbf{Case 2. Noise-added dynamics Prediction from noiseless data:} $\varepsilon_{\rm train}=0$, $\varepsilon_{\rm target}>0$.  
    The aim is to predict the dynamics at a non-zero noise intensity $\varepsilon_{\rm target}$ from noiseless training data.

    \item \textbf{Case 3. Noise-added dynamics Prediction from noisy data:} $\varepsilon_{\rm train}>0$, $\varepsilon_{\rm target}>0$.  
    The aim is to predict the dynamics at a target noise intensity $\varepsilon_{\rm target}$ from training data obtained at a different non-zero noise intensity $\varepsilon_{\rm train}$.
\end{enumerate}

In Case 1, we exploit the fact that the reservoir can function as a dynamic noise-removal filter even without any explicitly designed denoising procedure.
In Case 2, if the reservoir successfully generalizes the underlying dynamics around the training data, it can reconstruct the noise response by evolving consistently with the target dynamical system, even when noise drives the state away from the attractor present in the noiseless system.
Case 3 can be regarded as an integration of Cases 1 and 2: denoising is effectively realized in the training phase, while extrapolation to a different noise intensity is achieved in the reconstruction phase.
In Cases 2 and 3, the entire noise-induced bifurcation structure is predicted by reconstructing the family of dynamical behaviors associated with different noise intensities.

We use echo state networks (ESNs)~\cite{jaeger2001echo, jaeger2004harnessing} as a reservoir computing model. 
The time evolution of the ESN is given by
\begin{eqnarray}
{\bm x}(t) &=& {\bm g}\left(\rho W{\bm x}(t-1) + \sigma W^{\rm in}{\bm u}(t) + {\bm b}\right), 
\label{eq:ESN}
\end{eqnarray}
where reservoir states at time step $t$ are represented as ${\bm x}(t) \in {\mathbb R}^{d}$.
The input is denoted by ${\bm u}(t) \in {\mathbb R}^{d_{\rm in}}$.
The activation function ${\bm g}\colon \mathbb{R}^d \to \mathbb{R}^d$ is taken to act element-wise. 
The entries of the input-weight matrix $W^{\rm in} \in \mathbb{R}^{d\times d_{\rm in}}$, the recurrent weight matrix $W \in \mathbb{R}^{d\times d}$, and the bias vector ${\bm b} \in \mathbb{R}^d$ are sampled independently from the uniform distribution on $[-1,1]$.
The recurrent weight matrix $W$ is then rescaled so that its spectral radius is unity.
The spectral radius and input scaling are controlled by the hyperparameters $\rho$ and $\sigma$, respectively. 
These reservoir configurations $W_{\rm in}, W$, ${\bm b}$, $\rho$, and $\sigma$ are not optimized through the training procedure in reservoir computing. 
Detailed settings are provided in the Appendix~\ref{sec:setting}.

In this study, attractor reconstruction using reservoir computing consists of three phases: (1) listening phase, (2) training phase, and (3) reconstruction phase, shown in Fig.~\ref{fig:schematics}B.
In the listening phase, a time series ${\bm y}_{\varepsilon_{\rm train}}(t)$ generated by the target dynamical system at a given noise intensity $\varepsilon_{\rm train}$ is fed into the ESN as the input ${\bm u}(t)$. 
The resulting reservoir state driven by the noisy signal ${\bm y}_{\varepsilon_{\rm train}}(t)$ is denoted by ${\bm x}_{\varepsilon_{\rm train}}(t)$.
The total length of the input time series is $T_{\rm wash}+T_{\rm train}$, where $T_{\rm wash}$ is the washout period used to remove the influence of the initial reservoir state, and $T_{\rm train}$ is the training duration used.

In the training phase, we determine the readout matrix $W^{\rm out}\in \mathbb{R}^{(d+1)\times d_{\rm out}}$ so that the ESN output
\begin{eqnarray}
\label{eq:trainingOutput}
\hat{{\bm y}}_{\varepsilon_{\rm train}}(t) &=& [{\bm x}^{\top}_{\varepsilon_{\rm train}}(t); 1]W^{\rm out}
\end{eqnarray}
predicts the next step of the training signal ${\bm y}_{\varepsilon_{\rm train}}(t+1)$. Specifically, $W^{\rm out} $ is obtained by ridge regression:
\begin{eqnarray}
\label{eq:ridgeregression}
W^{\rm out}
&=&
\underset{W^{\rm out}}{\rm argmin}
\sum_{t=T_{\rm wash}}^{T_{\rm wash}+T_{\rm train}-1}
\left\|
{\bm y}_{\varepsilon_{\rm train}}(t+1)
-
\hat{{\bm y}}_{\varepsilon_{\rm train}}(t)
\right\|^2 \nonumber +
\beta \|W^{\rm out}\|_{\rm F}^2, 
\end{eqnarray}
where $\beta > 0$ is the regularization parameter, and $\|\cdot\|_{\rm F}$ denotes the Frobenius norm.

In the reconstruction phase, we construct an autonomous dynamical system by adopting a feedback configuration in which the reservoir output is fed back as the input at the next time step.
The aim is for the projected output $\hat{{\bm y}}(t)$ of this autonomous reservoir to reconstruct the target dynamical system. 
To reconstruct the dynamics at a target noise intensity $\varepsilon_{\rm target}$, noise of the same intensity is added to the output.
These procedures are summarized by
\begin{eqnarray}
\label{eq:closedloop}
\hat{{\bm y}}_{\varepsilon_{\rm target}}(t) &=& [{\bm x}^{\top}(t); 1]W^{\rm out} + \varepsilon_{\rm target}{\bm v}(t), \\
{\bm u}(t+1) &=& \hat{{\bm y}}_{\varepsilon_{\rm target}}^{\top}(t).
\end{eqnarray}
Through this feedback mechanism, the reservoir output $\hat{{\bm y}}_{\varepsilon_{\rm target}}(t)$ is expected to reconstruct the behavior of the target dynamical system under the unseen noise intensity $\varepsilon_{\rm target}$.

%% file: section/3_experimentalResults.tex
\section{Results}
In this section, we demonstrate that the proposed framework can cancel dynamical noise and predict noise-added dynamics in typical dynamical systems.
We consider three types of noise-induced bifurcations: noise-induced chaos, noise-induced order, and two-step noise-induced bifurcation, in dynamical systems with dynamic noise, which serve as benchmarks for evaluating the performance of the proposed framework: the random logistic map~\cite{matsumoto1983noise, sano2020reduction}, the random Belousov–Zhabotinsky (BZ) map~\cite{matsumoto1983noise}, and the stochastic Lorenz system~\cite{toral2001analytical}. 
Details of the benchmark systems and experimental settings are provided in Appendices~\ref{sec:benchmark} and \ref {sec:setting}, respectively.

\subsection{Dynamic noise cancellation}

\begin{figure*}[htbp]
\begin{center}
\includegraphics[width=\linewidth]{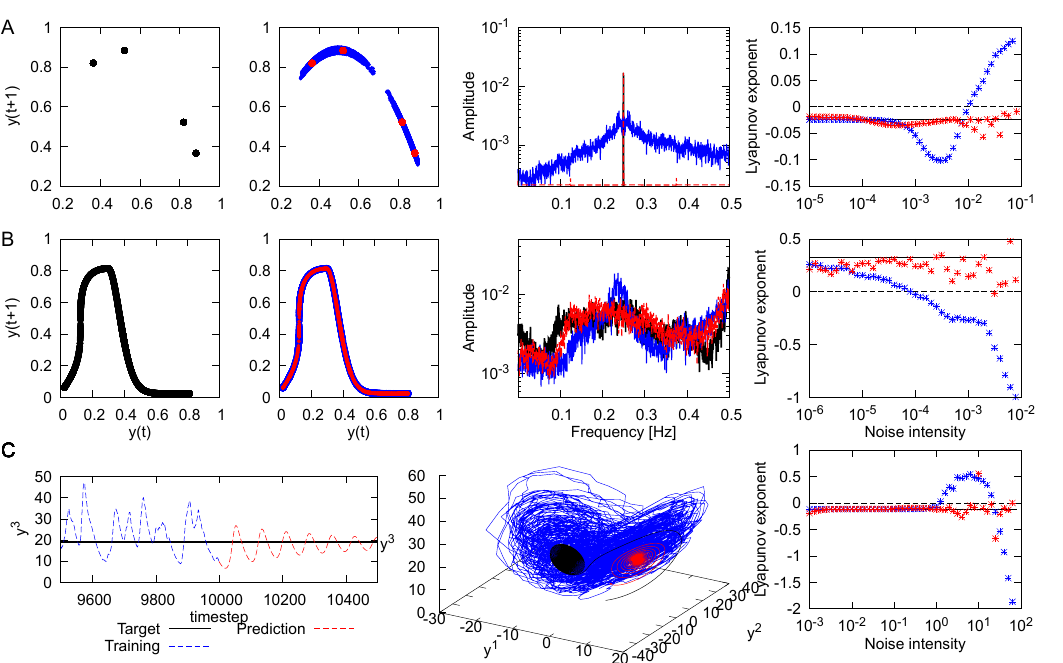}
\caption{
\label{fig:noiseCancellaration}
Results of dynamic noise cancellation.
Black, blue, and red points and lines indicate those for the noiseless target, noisy training, and reconstructed signals, respectively.
A, B. Results for the logistic map and the BZ map.
In each row, the panels from left to right present the return map of the noiseless target, the return map of the noisy training and reconstruction data, the Fourier spectra, and the Lyapunov exponent as a function of noise intensity.
C. Results for the Lorenz system.
The panels from left to right present the time series, the reconstructed trajectory in phase space, and the Lyapunov exponent as a function of noise intensity.
In the time-series panel, the ESN is switched from open-loop to closed-loop operation at time step 10,000.
We set the parameters $(\rho, \sigma, \varepsilon_{\rm train})=(0.05, 1.00, 0.0158), (0.00, 15.85, 0.00158)$, and $(0.30, 1.00, 31.62)$ in A, B, and C, respectively.
}

\end{center}
\end{figure*}

This section presents an analysis of dynamic noise cancellation in the three benchmark systems. 
By comparing the geometry of the reconstructed attractors and the Fourier spectra produced by the reservoir, under representative ESN parameters, with those of the target systems, we assess the spatiotemporal reconstruction capability of the method. 
We then compute the Lyapunov exponents of the reservoir in the reconstruction phase and compare them with those of the noise-free target systems to evaluate how accurately the noise-induced bifurcation structure and bifurcation points are reconstructed. 
Finally, by varying the noise intensity used during training, we examine the robustness of the reconstruction using the maximum Lyapunov exponent, which serves as an indicator of bifurcation, as the evaluation metric.

Figure~\ref{fig:noiseCancellaration}A shows the logistic-map result. Even when the ESN is trained on noisy trajectories in the noise-induced chaotic regime, it reconstructs the noiseless period-4 orbit accurately in both geometry and temporal evolution. 
The Lyapunov exponent is also reproduced well before the bifurcation, and remains close to the target value even in the post-bifurcation regime.
Figure~\ref{fig:noiseCancellaration}B shows the BZ-map result. 
Training is performed on trajectories in the noise-induced ordered regime, yet the ESN reconstructs the original chaotic dynamics.
The 0.25-Hz peak in the power spectrum associated with the noise-induced order in the learned dynamics disappears in the noise-canceled dynamics, and the Lyapunov exponent is correctly recovered as positive over nearly the entire range considered.
Figure~\ref{fig:noiseCancellaration}C shows the Lorenz system result. The ESN is trained on trajectories regularized by the two-step bifurcation and reconstructs the target orbit converging to a fixed point. While reconstruction fails at some specific noise intensities, the overall two-step bifurcation structure of the Lyapunov exponent is recovered.

\input{section/4_theoreticalAnalysis}

\subsection{Noise-added dynamics prediction from noiseless data}
\begin{figure*}[htbp]
\begin{center}
\includegraphics[width=\linewidth]{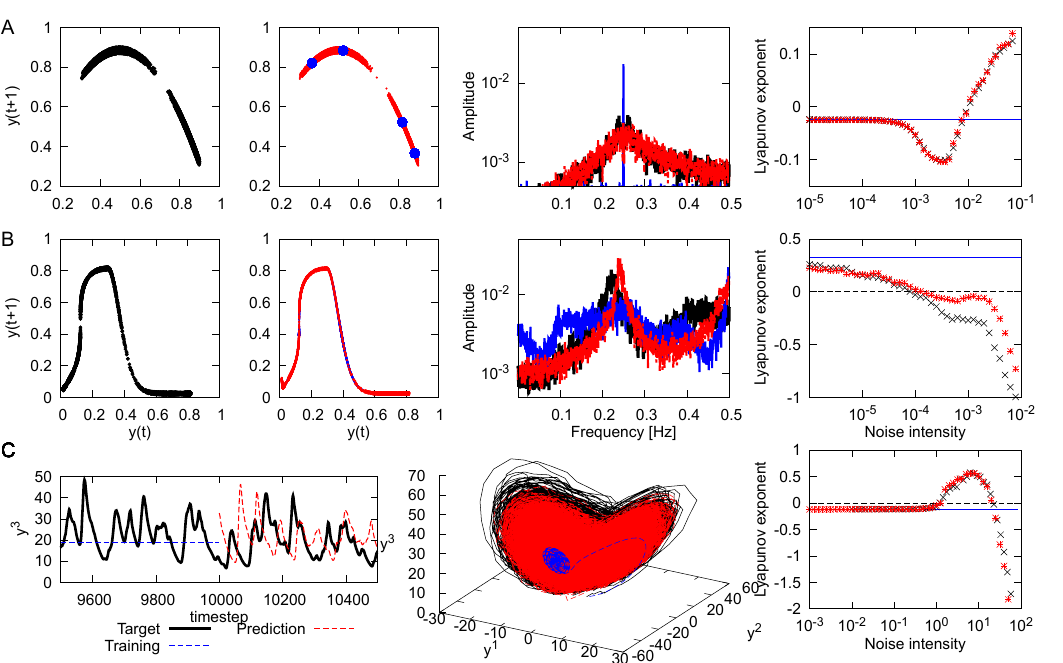}
\caption{
\label{fig:nibReconstruction}
Results of noise-added dynamics prediction from noiseless data.
Black, blue, and red points and lines indicate the noisy target, noiseless training, and predicted signals, respectively.
A, B. Results for the logistic map and the BZ map.
In each row, the panels from left to right present the return map of the noiseless target, the return map of the noisy training and reconstruction data, the Fourier spectra, and the Lyapunov exponent as a function of noise intensity.
C. Results for the Lorenz system.
The panels from left to right present the time series, the reconstructed trajectory in phase space, and the Lyapunov exponent as a function of noise intensity.
In the time-series panel, the ESN is switched from open-loop to closed-loop operation at time step 10,000.
We set the parameters $(\rho, \sigma, \varepsilon_{\rm target})=(0.05, 1.00, 0.0158), (0.00, 15.85, 0.00759)$, and $(0.00, 1.00, 31.62)$ in A, B, and C, respectively.
}    
\end{center}
\end{figure*}

We next train the ESN on noiseless trajectories and test whether noise injection during prediction reproduces the corresponding noise-induced bifurcations.

For the logistic map (Fig.~\ref{fig:nibReconstruction}A), the ESN learns the period-4 orbit accurately, and injected noise induces chaos in the reconstructed dynamics above the correct noise scale. Both the attractor geometry and the Lyapunov exponent are reproduced well.
For the BZ map (Fig.~\ref{fig:nibReconstruction}B), the reconstructed attractor remains visually similar to that of the noiseless dynamics, but the power spectrum and Lyapunov exponent indicate that the ESN captures the noise-induced order qualitatively, including the bifurcation point.
For the stochastic Lorenz system (Fig.~\ref{fig:nibReconstruction}C), the ESN is trained on transient trajectories converging to the noiseless fixed point. When noise is injected during reconstruction, the ESN reproduces both attractors associated with the two-step bifurcation and captures the corresponding change in Lyapunov exponent with high accuracy.

\subsection{Noise-added dynamics prediction from noisy data}
\begin{figure*}[htbp]
\begin{center}
\includegraphics[width=\linewidth]{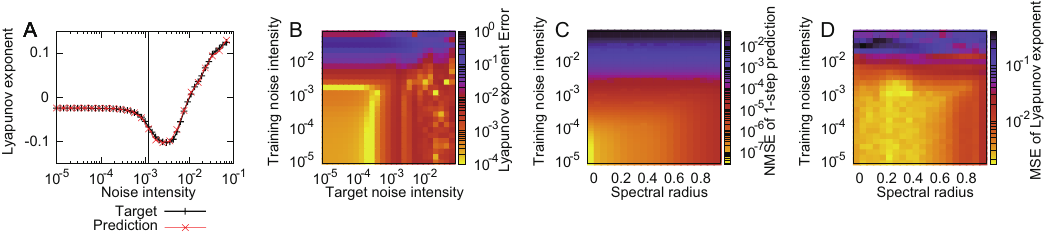}
\caption{
\label{fig:advanced}
Prediction of the noise-induced bifurcation in the logistic map from noisy signals. 
The results are the average values across five different random realizations of ESNs.
A, Reconstruction of the Lyapunov exponent from training data with noise intensity $\varepsilon_{\rm train}=0.00120$. 
B, Error map for the reconstructed Lyapunov exponent as a function of the training and target noise intensities, $\varepsilon_{\rm train}$ and $\varepsilon_{\rm target}$. 
C, Normalized mean squared error map of the one-step prediction in training phase as a function of the spectral radius $\rho$ and the training noise intensity $\varepsilon_{\rm train}$.
D, Mean squared error map of the reconstructed Lyapunov exponent, averaged over the target noise intensity $\varepsilon_{\rm target}$, as a function of the spectral radius $\rho$ and the training noise intensity $\varepsilon_{\rm train}$.
We set the reservoir parameters $(\rho, \sigma)=(0.3, 1.00)$ in A and B.
}    
\end{center}
\end{figure*}

We next examine how prediction performance depends on the training data and reservoir configuration, using the random logistic map as a test case.
This is achieved by combining dynamical noise cancellation during training with the prediction of the noise-added dynamics during reconstruction.

Figure~4A shows that the ESN reconstructs the Lyapunov exponent across the bifurcation structure with high accuracy even when trained on noisy data. 
Figure~4B shows that accurate reconstruction is obtained as long as the training noise remains below the onset of the bifurcation.
Interestingly, the smallest error is not achieved in the limit of vanishing training noise, but around $\varepsilon_{\rm train}\approx 10^{-3}$, suggesting that moderate noise can improve reconstruction by driving trajectories through a wider region of phase space.

The predicted noise-free and noisy dynamics obtained by the proposed method differ from the dynamics at the noise intensity presented during training. 
This raises the question of how one can justify that the predicted dynamics indeed correspond to those altered by the underlying noise-induced bifurcation.
Figure~\ref{fig:advanced}C and D show the performance in phase 2 (training) and phase 3 (prediction), respectively, as functions of the spectral radius, which governs the memory capacity of the reservoir, and the training noise intensity. 
Both panels exhibit the same overall trend: the performance is better in the regime of small spectral radius and low training noise intensity.
This suggests that a small error in the training phase is expected to lead to good predictive performance for dynamics modified by noise.
The smallest error of Lyapunov exponent is obtained around $\rho=0.3$, rather than at $\rho=0$, although the target system is Markovian.
This suggests that reservoir memory provides redundancy for one-step prediction and improves robustness in the closed-loop reconstruction.
It should be noted, however, that this tendency cannot be inferred from the training results in panel C alone.

%% file: section/4_theoreticalAnalysis.tex
\subsubsection{Theoretical analysis \label{sec:theory}}
Here, we provide a theoretical guarantee that, in the training phase of dynamic noise cancellation, the weights learned from time series contaminated with noise become sufficiently close to those learned from the noise-free system.

\begin{thm}[Dynamic noise cancellation]\label{theo:noisecancel}
Consider a function space $\mathcal{F} \subset \{f \colon \mathcal{Y} \to \mathcal{Y}\}$, where $\mathcal{Y}$ is a subset of $\mathbb{R}$.
We assume that a reservoir without memory ($\rho = 0$) is a universal approximator for $\mathcal{F}$ in the $L^{\infty}$ sense.
Then, for any $f \in \mathcal{F}$, $\delta > 0$ and $\varepsilon> 0$, there exist $d > 0$ and $T > 0$ such that the output $\hat{y}$, generated using the weights obtained by linear regression from noisy data $X_{\varepsilon}=(\bm{x}_\varepsilon(1) \cdots  \bm{x}_\varepsilon(T))^\top \in \mathbb{R}^{T \times d}$, $\bm{v} = (v(1) \cdots v(T))^{\top} \in \mathbb{R}^T$ satisfies the following inequality for any $y \in \mathcal{Y}$:
\begin{eqnarray}
|f(y) - \hat{y}| \leq \delta + |{\bm x}^{\top} X_{\varepsilon}^{+}\bm{v}|\varepsilon,
\label{eq:noiselessTraining}
\end{eqnarray}
where ${\bm x} = {\bm g}(y) \in \mathbb{R}^{d} $ is the reservoir state driven by input $y$, and $X_{\varepsilon}^{+} \in \mathbb{R}^{d \times T}$ is the Moore–Penrose inverse of $X_{\varepsilon}$. 
In particular, when the noise is independent and has zero mean, i.e., $\Sigma_{t=0}^{T-1} v(t)v(t+1)=0$, the reservoir trained on noisy data universally approximates $f\in \mathcal{F}$:
\begin{eqnarray}
|f(y) - \hat{y}| < \delta. \label{eq:noCorrelation}
\end{eqnarray}
\end{thm}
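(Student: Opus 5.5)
With $\rho=0$ the reservoir is memoryless, so the state is a fixed feature map of the current input, $\bm{x}_\varepsilon(t)=\bm{g}(\sigma W^{\rm in}y_\varepsilon(t)+\bm{b})$, abbreviated $\bm{g}(y_\varepsilon(t))$ as in the statement. The plan is to split the regression targets into a deterministic part and a noise part, use linearity of least squares, and bound the two resulting error terms separately.

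Under additive noise (Eq.~(\ref{eq:additiveNoise})) the training targets are
\[
y_\varepsilon(t+1)=f(y_\varepsilon(t))+\varepsilon v(t+1).
\]
Collecting them into $\bm{Y}=\bm{F}+\varepsilon\bm{v}$, with $\bm{F}=(f(y_\varepsilon(1))\cdots f(y_\varepsilon(T)))^\top$ and $\bm{v}$ indexed consistently with the statement, the least-squares weights are
\[
W^{\rm out}=X_\varepsilon^{+}\bm{Y}=X_\varepsilon^{+}\bm{F}+\varepsilon X_\varepsilon^{+}\bm{v}.
\]
I take the ridge limit $\beta\to0$, or absorb the bias into $\bm{x}$. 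For a test input $y$ with $\bm{x}=\bm{g}(y)$, the output is $\hat y=\bm{x}^\top X_\varepsilon^{+}\bm{F}+\varepsilon\bm{x}^\top X_\varepsilon^{+}\bm{v}$. The triangle inequality then reduces Eq.~(\ref{eq:noiselessTraining}) to showing $|f(y)-\bm{x}^\top X_\varepsilon^{+}\bm{F}|\le\delta$ uniformly in $y\in\mathcal{Y}$.

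\textbf{Deterministic term.} This is the only place universality is used. Because the memoryless reservoir is a universal approximator for $\mathcal{F}$ in $L^\infty$, for $d$ large enough there is $\bm{w}^\ast$ with $\sup_{y\in\mathcal{Y}}|f(y)-\bm{g}(y)^\top\bm{w}^\ast|<\delta/2$. Two quantities must then be controlled.
\begin{itemize}
\item Since $\bm{F}=X_\varepsilon\bm{w}^\ast+\bm{r}$ with $\|\bm{r}\|_\infty<\delta/2$, the fitted weights satisfy $X_\varepsilon^{+}\bm{F}=P\bm{w}^\ast+X_\varepsilon^{+}\bm{r}$, where $P=X_\varepsilon^{+}X_\varepsilon$ is the projection onto the row space of $X_\varepsilon$.
\item I choose $T$ large enough that the visited states $y_\varepsilon(t)$ make the row space of $X_\varepsilon$ contain every $\bm{g}(y)$, $y\in\mathcal{Y}$. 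Noise helps here by spreading the trajectory over $\mathcal{Y}$, and at least $d$ generic samples are needed. Then $\bm{x}^\top P\bm{w}^\ast=\bm{x}^\top\bm{w}^\ast$.
\end{itemize}
What remains is the residual contribution $|\bm{x}^\top X_\varepsilon^{+}\bm{r}|$. This is the main obstacle, because $\|X_\varepsilon^{+}\|$ can blow up when $X_\varepsilon$ is ill-conditioned. My first attempt would be to choose $d$ first, so that the approximation error is tiny relative to the conditioning, and then choose $T$. Alternatively, I would use that $\bm{x}^\top X_\varepsilon^{+}$ acts as a set of interpolation weights over the training samples, and bound $\|\bm{x}^\top X_\varepsilon^{+}\|_1$ via the sample covariance $X_\varepsilon^\top X_\varepsilon/T$ converging to a positive-definite limit. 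At this level of rigour I would state this conditioning assumption explicitly rather than derive it.

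\textbf{Noise term.} The second claim, Eq.~(\ref{eq:noCorrelation}), needs $\bm{x}^\top X_\varepsilon^{+}\bm{v}=0$. The $t$-th row of $X_\varepsilon$ is $\bm{g}(y_\varepsilon(t))$, and $y_\varepsilon(t)$ depends only on noise up to time $t$, while the target noise is $v(t+1)$. Hence $X_\varepsilon^\top\bm{v}=\sum_t\bm{g}(y_\varepsilon(t))v(t+1)$ is a sum of products of a past-measurable quantity with independent zero-mean noise. Its normalized version vanishes; the statement idealizes this as exact orthogonality via the condition $\sum_t v(t)v(t+1)=0$, which kills the linear part of this correlation. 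Writing $X_\varepsilon^{+}=(X_\varepsilon^\top X_\varepsilon)^{+}X_\varepsilon^\top$ gives $X_\varepsilon^{+}\bm{v}=0$, so the bound reduces to $\delta$. A strict inequality is obtained by running the argument with $\delta/2$ in place of $\delta$.
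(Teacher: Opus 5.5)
Your route is the paper's route: the split $W^{\rm out}=X_\varepsilon^{+}\bm F+\varepsilon X_\varepsilon^{+}\bm v$, the comparison with the universal-approximation weights, the requirement $\bm x^\top X_\varepsilon^{+}X_\varepsilon=\bm x^\top$ (which the paper uses without stating it), the bound $(1+\|\bm x^\top X_\varepsilon^{+}\|)\delta+\varepsilon|\bm x^\top X_\varepsilon^{+}\bm v|$, and the rewriting $X_\varepsilon^{+}=(X_\varepsilon^\top X_\varepsilon)^{+}X_\varepsilon^\top$ to expose the feature--noise correlations. Your concern about the conditioning of $X_\varepsilon$ is legitimate. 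The paper does not resolve it either: it simply asserts that the coefficient $1+\|\bm x^\top X_\varepsilon^{+}\|_\infty$ is bounded and renames the product $\delta$. So stating this as an explicit assumption is acceptable. Note, however, that choosing $d$ first does not by itself settle it, because that constant also depends on $d$.

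The step that fails as written is the last one. The hypothesis $\sum_t v(t)v(t+1)=0$ does not imply $X_\varepsilon^\top\bm v=0$, and hence does not give $X_\varepsilon^{+}\bm v=0$. The rows $\bm g(y_\varepsilon(t))$ are nonlinear functions of $f(y_\varepsilon(t-1))$ and of the past noise. A condition on the lag-one autocorrelation of the noise therefore leaves $\sum_t\bm g(y_\varepsilon(t))v(t+1)$ generically nonzero for every finite $T$. This is true even for linear features, because of the term $\sum_t f(y_\varepsilon(t-1))v(t+1)$.

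The paper does not claim exact orthogonality. It argues that this term becomes arbitrarily small as $T\to\infty$ under independence and zero mean, and then absorbs it into $\delta$ together with the approximation term. You already have both ingredients for this, so combine them:
\[
\bm x^\top X_\varepsilon^{+}\bm v=\bm x^\top\left(X_\varepsilon^\top X_\varepsilon/T\right)^{+}\left(X_\varepsilon^\top\bm v/T\right).
\]
Here $X_\varepsilon^\top\bm v/T\to 0$, since it is an average of bounded past-measurable features times independent zero-mean noise (a martingale-difference law of large numbers). Your assumed positive-definite limit of $X_\varepsilon^\top X_\varepsilon/T$ keeps the first factor bounded. Then fix $d$, take $T$ large, and absorb the remainder into $\delta$, exactly as you already do to get the strict inequality.
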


The proof is provided in the Appendix \ref{sec:proof}.
Regarding the universal approximation property assumed for the reservoir, echo state networks have been shown to possess universal approximation capabilities under a variety of settings~\cite{grigoryeva2018echo, gonon2021fading, sugiura2023nonessentiality, yasumoto2025universality}. 
This theorem suggests that, when noise causes the target dynamical system to behave in a non-reproducible manner, a memoryless reservoir can act as a filter that suppresses such non-reproducible components. 
This interpretation is also consistent with the well-known observation that observational noise during training can act not only as a source of degradation but also as a form of regularization~\cite{bishop1995training}. 

If the noise has a non-zero mean, that is, if it contains a systematic bias, the reservoir learns dynamics shifted by that mean.
If the noise has temporal correlations, the reservoir can instead learn part of that correlated structure, so the noise is not expected to be removed completely. 

Even when the noise takes the more general form of Eq.~(\ref{eq:generalNoise}), if $\varepsilon$ is sufficiently small and $f$ is real analytic, the system can be approximated by the additive-noise form in Eq.~(\ref{eq:additiveNoise}) through a Taylor expansion.

%% file: section/6_spin.tex
\subsection{General noise form: Current noise in spin-torque oscillator
\label{sec:sto}
}
We next consider a more practical non-additive and non-i.i.d. noise setting of the form in Eq.~(\ref{eq:generalNoise}), using a spin-torque oscillator (STO) as an example. 
STOs are spintronic devices relevant to neuromorphic computing~\cite{grollier2020neuromorphic}, and their dynamics are known to exhibit noise-induced bifurcations under piecewise-constant current input~\cite{akashi2020input-driven}.
Since bifurcations strongly affect their computational performance~\cite{akashi2020input-driven, akashi2022coupled}, predicting and suppressing noise-induced changes is of practical importance.

We simulate the STO by the Landau--Lifshitz--Gilbert equation and use the magnetization ${\bm m} = (m_x, m_y, m_z)^{\top}\in\mathbb{R}^{3}$ as the observed state (see Appendix \ref{sec:stoSetting}). 
In this system, the current enters the dynamics non-additively.

\begin{figure*}[htbp]
\begin{center}
\includegraphics[width=\linewidth]{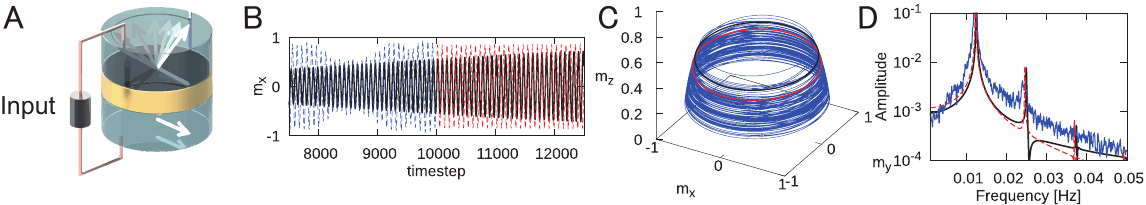}
\caption{
\label{fig:sto}
Results of dynamic noise cancellation in the STO. Black, blue, and red lines represent the target, training and reconstructed time series, respectively. 
A, Schematic illustration of the STO. 
B, Time series. The ESN is switched from open-loop to closed-loop at time step 10,000. 
C, Trajectories. 
D, Fourier spectra.
We set the parameters $(\rho, \sigma, \varepsilon_{\rm train})=(0.00, 1.00, 5.5)$.
}    
\end{center}
\end{figure*}

The results are shown in Fig.~\ref{fig:sto}. 
Although the training trajectories exhibit large fluctuations and lie in a regime with a positive maximum conditional Lyapunov exponent~\cite{akashi2020input-driven}, the reconstructed dynamics converge to a stable limit cycle absent from the training data. 
The fluctuations observed in the Fourier spectrum of the training signal are also absent from the Fourier spectrum of the reconstructed signal.
The prediction of noise-induced chaotic dynamics has been also demonstrated in the Supplementary Information.

%% file: section/7_conclusion.tex
\section{Conclusion}
We propose a simple machine learning framework for reconstructing noise-induced bifurcations from data with a single noise condition.
We experimentally demonstrate the effectiveness of the proposed framework for three well-known dynamical systems.
We provide a theoretical analysis for dynamic noise cancellation.
The proposed framework is applicable to the complex real-world model of neuromorphic spintronics.

In the proposed method, noise cancellation can be performed without requiring prior knowledge of the noise distribution or the specific manner in which the noise enters the system.
In contrast, reconstructing noise-induced bifurcations requires information about the noise distribution and how it is applied to the state variables.
These limitations are expected to be mitigated by combining the proposed approach with methods for separating noise from the underlying system dynamics (e.g., SINDy), or with parameter-aware reservoir-based approaches~\cite{kim2021teaching, kong2023reservoir} that enable prediction solely from the noise intensity. 
We present results on noise-induced chaos prediction in the STO using a parameter-aware reservoir in the Supplementary Information.

In the numerical experiments, we evaluated the prediction of bifurcations using Lyapunov exponents.
In the Lorenz-system experiments, we considered noise intensities comparable to the size of the attractor. 
Under such large noise, however, it is not clear whether the assumptions required for the existence of Lyapunov exponents under the multiplicative ergodic theorem are satisfied.
In situations where additive Gaussian noise is applied to a non-singular dynamical system, the existence of an absolutely continuous invariant measure is theoretically guaranteed~\cite{galatolo2025efficient, barrientos2025finitude}.
Thus, when evaluating reconstruction, it is necessary to identify the relevant invariant quantities under the noise conditions being considered.

The results of this study advance our understanding of complex nonlinear phenomena under noise.
In addition, the results may advance next-generation engineering applications that positively exploit complex dynamics such as neuromorphic devices.

%% file: section/A2_benchmarks.tex
\section{Benchmark systems\label{sec:benchmark}}
We consider three dynamical systems with dynamic noise as benchmark systems for evaluating the performance of the proposed framework: the logistic map, the Belousov–Zhabotinsky (BZ) map, and the Lorenz system. 
\subsection{Random logistic map}
The first system is the random logistic map~\cite{sano2020reduction}. 
This is a one-dimensional discrete-time dynamical system given by
\begin{eqnarray}
y_{\varepsilon}(t+1) = ay_{\varepsilon}(t)(1-y_{\varepsilon}(t)) + \varepsilon v(t) \; ({\rm mod}\; 1),
\end{eqnarray}
where noise $v(t)$ is drawn independent and identically distributed (i.i.d.) from the uniform distribution on $[-1,1]$.
We impose periodic boundary conditions so that the state remains in $[0,1)$. 
In the absence of noise, the logistic map exhibits the well-known period-doubling bifurcation as the parameter $a$ is varied. For $a=3.54$, the noiseless system has a period-4 orbit. 
As the dynamic noise intensity increases, the system undergoes noise-induced chaos, characterized by a positive Lyapunov exponent.

\subsection{Random BZ map}
The second system is a random BZ map~\cite{tomita1980towards}. 
The BZ reaction is a well-known chemical reaction that exhibits complex oscillatory and chaotic behavior. 
The BZ map is a one-dimensional discrete-time model of this reaction, constructed from experimental data~\cite{hudson1979experimental}, and is given by
\begin{eqnarray}
y_{\varepsilon}(t+1) = \left\{
\begin{array}{ll}
\left(a + \left(y_{\varepsilon}(t) - \frac{1}{8}\right)^{\frac{1}{3}}\right){\rm e}^{-y_{\varepsilon}(t)} + b + \varepsilon v(t) & y_{\varepsilon}(t) \leq 0.3, \\
c\left(10y_{\varepsilon}(t){\rm e}^{-10y_{\varepsilon}(t)/3}\right)^{19} + b + \varepsilon v(t) & y_{\varepsilon}(t) > 0.3,
\end{array}
\right.
\label{eq:bz}
\end{eqnarray}
where noise $v(t)$ is drawn independent and identically distributed (i.i.d.) from the uniform distribution on $[-1,1]$.
$(a, b, c)=(0.125, 0.023289, 0.121205692)$ are a parameters for which the noiseless map exhibits chaotic dynamics. 
As the noise intensity increases, the Lyapunov exponent becomes negative, leading to noise-induced order~\cite{matsumoto1983noise, galatolo2020existence}.

\subsection{Stochastic Lorenz system}
The third benchmark system is the stochastic Lorenz system~\cite{toral2001analytical}. This is a three-dimensional continuous-time dynamical system, given by
\begin{eqnarray}
\label{eq:lorenz}
\dot{y}_\varepsilon^1 &=& p(y_\varepsilon^2-y_\varepsilon^1), \\
\dot{y}_\varepsilon^2 &=& -y_\varepsilon^1y_\varepsilon^3 + ry_\varepsilon^1 - y_\varepsilon^2 + \varepsilon \dot{W}, \\
\dot{y}_\varepsilon^3 &=& y_\varepsilon^1y_\varepsilon^2 - by_\varepsilon^3,
\end{eqnarray}
where the parameters are set to $(p,b,r)=(10,8/3,20)$.
Although the Lorenz system exhibits a strange attractor for the parameter value $r=28$, it has two stable fixed points at $r=20$. 
As the noise intensity $\varepsilon$ increases, noise-induced chaos emerges through stabilization of the strange attractor.
For larger noise intensities, a further bifurcation occurs in which the Lyapunov exponent becomes negative again. 
The stochastic Lorenz system thus exhibits a two-stage noise-induced bifurcation characterized by an order--chaos--order transition.

%% file: section/A2_experimentalSetting.tex
\section{Experimental setting \label{sec:setting}}
Here, we summarize the experimental settings of the ESNs used to reconstruct each model.

\begin{table}[htbp]
    \centering
    \begin{tabular}{rcccccc}
        Target system & Node size & Activation function & $T_{\rm wash}$  & $T_{\rm train}$ & $T_{\rm eval}$ \\ \hline
        Logistic map & 100 & tanh & 100 & 9,000 & 900 \\
        BZ map & 1,000 & hard-tanh & 100 & 9,000 & 900   \\
        Lorenz system & 100 & tanh & 1000 & 90,000 & 9000 \\
        STO & 500 & tanh & 9,000 & 35,000 & 5,000 \\
    \end{tabular}
    \caption{Reservoir configurations.}
    \label{tab:esnSetting}
\end{table}
We use the ridge parameter $\beta = 0.00001$ in all experiments.

For the reservoir used to learn the BZ map, we employed the hard-tanh activation function defined as
\begin{eqnarray}
    {\bm g}(x) = \left\{
\begin{array}{ll}
-1 & (x < -1), \\
x & (-1 \leq x \leq 1), \\
1 & (x > 1),
\end{array}
\right.
\end{eqnarray}
It has been reported that hard tanh, which contains non-differentiable points, is well-suited for learning non-smooth dynamical systems~\cite{shi2024predicting}. 
Because the BZ map in Eq.~(\ref{eq:bz}) has an extremely strong nonlinearity of nineteenth order, using hard tanh as the activation function improved the learning performance of one-step prediction.

We also consider cases in which the target attractor is a periodic orbit or an equilibrium point.
In such cases, the training data contain only a limited number of distinct patterns, which can easily lead to overfitting. 
Therefore, when the training data are noise-free and the attractor is a periodic orbit or an equilibrium point, we construct the training data not from a single trajectory but by concatenating transient trajectories of length $T$ starting from $L$ different initial conditions.
No training is performed on the junctions between consecutive trajectories.
More specifically, for the training data $\left\{ \left\{ {\bm y}_{\varepsilon_{\rm train}}^l(t) \right\}_{t=1,\dots,T-1} \right\}_{l=1,\dots,L}$, the corresponding teacher data are given by $\left\{ \left\{ {\bm y}_{\varepsilon_{\rm train}}^l(t) \right\}_{t=2,\dots,T} \right\}_{l=1,\dots,L}$.
Similar training data constructed from transient trajectories have also been used in refs.~\cite{kim2021teaching, tadokoro2024trans, tokuda_prediction_2024}. 
We use $L=100$ in the training of the logistic map and the Lorenz system.

To reduce the computational cost, the largest Lyapunov exponents of both the target system and the ESN were computed using the Shimada--Nagashima method~\cite{shimada1979numerical} instead of the Jacobian calculation.

%% file: section/A1_proof.tex
\section{Proof of theorem \label{sec:proof}}
Here, we provide the proof of Theorem~\ref{theo:noisecancel}. 
First, we define that a model is a universal approximator for $\mathcal{F} \subset \{f \colon \mathcal{Y} \to \mathcal{Y}\}$ in the $L^{\infty}$ sense if, for any $f \in \mathcal{F}$ and $\delta > 0$, there exist a number of units $d > 0$ and an output weight $w^{{\rm out}}_{0} \in {\mathbb{R}^{d+1}}$ such that the following inequality holds for any $y\in \mathcal{Y}$:

\begin{eqnarray}
&&|f(y) - \hat{y}| \\
&=&|f(y) - {\bm{x}^{\top}}w^{{\rm out}}_{0}| \\
&=& |f(y) - \bm{g}(y)^{\top} w^{{\rm out}}_{0}| < \delta. \label{eq:uap}
\end{eqnarray}

\begin{proof}[Proof of Theorem \ref{theo:noisecancel}]
We first show the relationship between the output weights $w^{\rm out}_{\varepsilon}$, obtained by training on noisy data $X_{\varepsilon}=(\bm{g}(y(0)) \cdots \bm{g}(y(T-1)))^{\top}$ and $Y_{\varepsilon}=(f(y(0)) \cdots f(y(T-1)))^{\top} +\varepsilon{\bm v} = Y_0+\varepsilon{\bm v}$, and the output weights $w^{{\rm out}}_{0}$, which achieve universal approximation of the underlying noise-free dynamical system.
By linear regression, the weights $w_{{\rm out},\varepsilon}$ are derived from the data as follows:

\begin{eqnarray}
w^{\rm out}_{\varepsilon} = X_{\varepsilon}^{+}Y_{\varepsilon}.
\end{eqnarray}
We evaluate the error between the prediction and the target signals.   
\begin{eqnarray}
&& |y_0 - \hat{y}_0| \nonumber \\
&=& |y_0 - {\bm x}^{\top} w^{{\rm out}}_{\varepsilon}|  \nonumber \\
&\leq& |y_0 - {\bm x}^{\top} w^{{\rm out}}_{0}| + |{\bm x}^{\top} w^{{\rm out}}_{0} - {\bm x}^{\top} w^{{\rm out}}_{\varepsilon}|  \nonumber \\
&\leq& \delta + |{\bm x}^{\top} X_{\varepsilon}^{+} (X_{\varepsilon}w^{{\rm out}}_{0} - (Y_0 + \varepsilon {\bm v})) |\nonumber \\
&\leq& (1 + \| {\bm x}^{\top} X_{\varepsilon}^{+}\|_\infty)\delta
+ |{\bm x}^{\top} X_{\varepsilon}^{+}{\bm v}|\varepsilon. \label{eq:proofNoiseCancel}
\end{eqnarray}
Because the coefficient of the first term in Eq.~(\ref{eq:proofNoiseCancel}) is bounded, we can replace it by a constant $\delta$, which yields Eq.~(\ref{eq:noiselessTraining}).
Furthermore, the second term in Eq.~(\ref{eq:proofNoiseCancel}) can be rewritten as follows:
\begin{eqnarray}
&& \varepsilon|{\bm x}^{\top} X_{\varepsilon}^{+}{\bm v}| \nonumber \\
&=& \varepsilon \lvert {\bm x}^{\top} (X_{\varepsilon}^{\top}X_{\varepsilon})^{+}X_{\varepsilon}^{\top}{\bm v} \rvert \nonumber \\
&=& \varepsilon \left|{\bm x}^{\top} (X_{\varepsilon}^{\top}X_{\varepsilon})^{+} 
\begin{pmatrix}
\langle x_{\varepsilon, 1}(t)v(t+1) \rangle_t \\
\vdots\\
\langle x_{\varepsilon, D}(t)v(t+1) \rangle_t
\end{pmatrix}
\right| \nonumber \\
&=& \varepsilon\left|{\bm x}^{\top} (X_{\varepsilon}^{\top}X_{\varepsilon})^{+} 
\begin{pmatrix}
{\rm cov}(x_{\varepsilon, 1}(t), v(t+1)) - \langle x_{\varepsilon, 1}(t) \rangle_t \langle v(t+1) \rangle_t \\
\vdots \\
{\rm cov}(x_{\varepsilon, D}(t), v(t+1)) - \langle x_{\varepsilon, D}(t) \rangle_t \langle v(t+1) \rangle_t
\end{pmatrix}
\right| \label{eq: noiseterm}
\end{eqnarray}
Here, $\langle v(t) \rangle_t$ denotes the time average of $v(t)$ from $t=1$ to $T$, and ${\rm cov}(x(t),v(t))$ denotes the covariance between the time series $x(t)$ and $v(t)$ over $t=1,\dots,T$.
In particular, when the noise is independent, that is, when $\lim_{T\to\infty}\sum_{t=0}^{T-1} v(t)v(t+1)=0$, and has zero mean, $\lim_{T\to\infty}\langle v(t)\rangle_t=0$, this term can be made arbitrarily small by taking $T$ sufficiently large. 
Therefore, by similarly absorbing the sum of this term and the first term, which can also be made arbitrarily small by taking $d$ sufficiently large, into a constant $\delta$, Eq.~(\ref{eq:noCorrelation}) is obtained.

\end{proof}

%% file: section/A3_spinParameter.tex
\section{The LLG equation for STO simulation
\label{sec:stoSetting}
}

We simulate STO dynamics by the LLG equation in section~\ref{sec:sto}.
The LLG equation is given by the following ordinary differential equation:
\begin{eqnarray}
\label{eq:llg}
\frac{d{\bm m}}{dt}&=&-\gamma {\bm m} \times {\bf H} - \gamma H_s(t) {\bm m} \times ({\bf p}\times {\bm m}) + \alpha {\bm m} \times \frac{d{\bm m}}{dt},
\\
H_s(t)&=&\frac{\hslash \eta j(t)}{2e(1+\lambda {\bm m}\cdot{\bf p}) M V}.
\end{eqnarray}
The parameter values in the LLG equation are the same as those used in Ref.~\cite{akashi2020input-driven} and are based on an experimental study~\cite{kubota2013spin} and a theoretical study~\cite{taniguchi2017relaxation}: the saturation magnetization $M = 1448.3 {\rm emu/c.c}$, interfacial magnetic anisotropy field $H_{{\rm K}} = 18.616 {\rm kOe}$, applied field $H_{{\rm appl}} = 2.0 \; {\rm kOe}$, volume of the free layer $V = \pi \times 60^2 \times 2 {\rm nm}^{3}$, spin polarization $\eta = 0.537$, gyromagnetic ratio $\gamma = 1.764 \times 10^7 {\rm rad/(Oe s)} $, spin-transfer torque asymmetry $\lambda = 0.288$, and Gilbert damping constant $\alpha = 0.005$.
The magnetization ${\bf p}$ in the reference layer is fixed to positive $x$. 

We inject a noise signal into the current $j(t)$.
$j(t)$ is given by $j(t) = j_{\rm dc} + \varepsilon j_{\rm noise}(t)$, where $\varepsilon$ is the input intensity, $j_{\rm dc}=2.5$ mA is a constant, and the noise signal is $j_{\rm noise}(t) = u_n\;\; (n = {\rm max}\{n \in \mathbb{Z} \; | \; n < t / \tau \}) $ as a piecewise-constant signal, where ${u_n}$ is an i.i.d. random variable drawn from the uniform distribution on $[-1,1]$ and $\tau=250$ ps as the input interval.

%% file: section/S1_spin_PARC.tex
\section{Prediction of noise-induced bifurcations in the spin-torque oscillator using parameter-aware reservoir computing
\label{sec:sto}}
Here, we reconstruct noise-induced bifurcations under general-form noise using parameter-aware reservoir computing (PARC, also called reservoir computing with control input)~\cite{kim2021teaching}.
In the main text, we considered a spin-torque oscillator as an example of a dynamical system with general-form noise that exhibits noise-induced bifurcations, and performed dynamic noise-cancellation experiments. 
Because the way in which the noise enters the internal state of the system was unknown, Cases 2 and 3 of the proposed method, which address reconstruction of noise-induced bifurcations, could not be applied. 

Parameter-aware reservoir computing (PARC) explicitly feeds control parameters into the reservoir, allowing it to learn a family of target time series together with their parameter dependence. 
Moreover, PARC can interpolate and extrapolate with respect to parameter values and has been shown to reconstruct bifurcation structures~\cite{kim2021teaching}.
Here, by using a noise-dependent physical quantity as the parameter input, we make the reservoir learn the dependence on noise intensity and thereby reproduce the noise-induced bifurcations.

\begin{figure*}[htbp]
\begin{center}
\includegraphics[width=\linewidth]{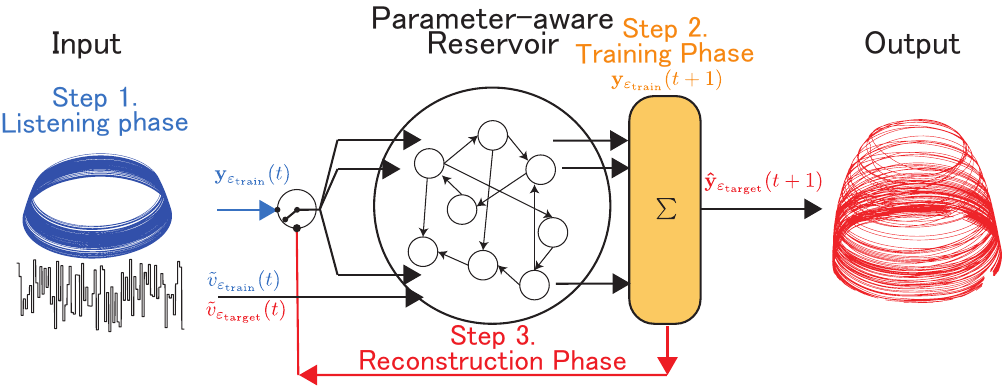}
\caption{\label{fig:parc}
Schematic illustration of the parameter-aware reservoir to reconstruct noise-induced bifurcations.
}
\end{center}
\end{figure*}

PARC adopts the architecture illustrated in Fig.~\ref{fig:parc}. The state evolution of the echo state network used here as PARC is given by
\begin{eqnarray}
{\bm x}_{\varepsilon}(t) &=& {\bm g}\left(\rho W{\bm x}_{\varepsilon}(t-1) + \sigma W^{\rm in}{\bm u}_{\varepsilon}(t) + W^{\rm param} \tilde{v}_{\varepsilon}(t) + {\bm b}\right),
\label{eq:parc}
\end{eqnarray}
where $\tilde{v}_{\varepsilon}(t)$ denotes an observed quantity of the system affected by noise.

Using PARC, we reconstruct the noise-induced bifurcations of the STO dynamics. 
As in the experiments in the main text, the STO magnetization $\bm{m}(t)=(m_x(t),m_y(t),m_z(t))^{\top}$ is provided as the reservoir input representing the dynamical state, while the current value $j(t)$ is given as a parameter input affected by noise. 
Unlike in the proposed method in the main text, the current value, rather than the noise intensity itself, is supplied as the parameter input; thus, the noise intensity is not handled explicitly.
Instead, the aim is to reproduce the dynamics and bifurcations under previously unseen current conditions.

\begin{figure*}[htbp]
\begin{center}
\includegraphics[width=0.5\linewidth]{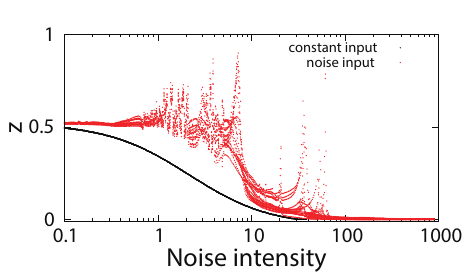}
\caption{\label{fig:stoBifurcation}
The bifurcation diagram of STO dynamics as a function of noise intensity $\varepsilon$.
}
\end{center}
\end{figure*}

We first describe the noise-induced bifurcation in the STO. 
Figure~\ref{fig:stoBifurcation} shows the bifurcation diagram of the STO dynamics as a function of the noise intensity $\varepsilon$.
For each noise intensity, the local minima of $m_z$ are plotted.
In the regime $\varepsilon < 1.0$, only a single minimum is observed, indicating limit-cycle dynamics. 
In the range $1.0 < \varepsilon < 100$, many minima appear, showing that the system has bifurcated to chaotic dynamics.
For $\varepsilon > 100$, a second bifurcation occurs, and the dynamics become non-chaotic again~\cite{akashi2020input-driven}. 
In this section, we train the reservoir on the dynamics in the regime $\varepsilon < 1.0$ and then attempt to predict the noise-induced chaos in the range $1.0 < \varepsilon < 100$.

\begin{figure*}[htbp]
\begin{center}
\includegraphics[width=\linewidth]{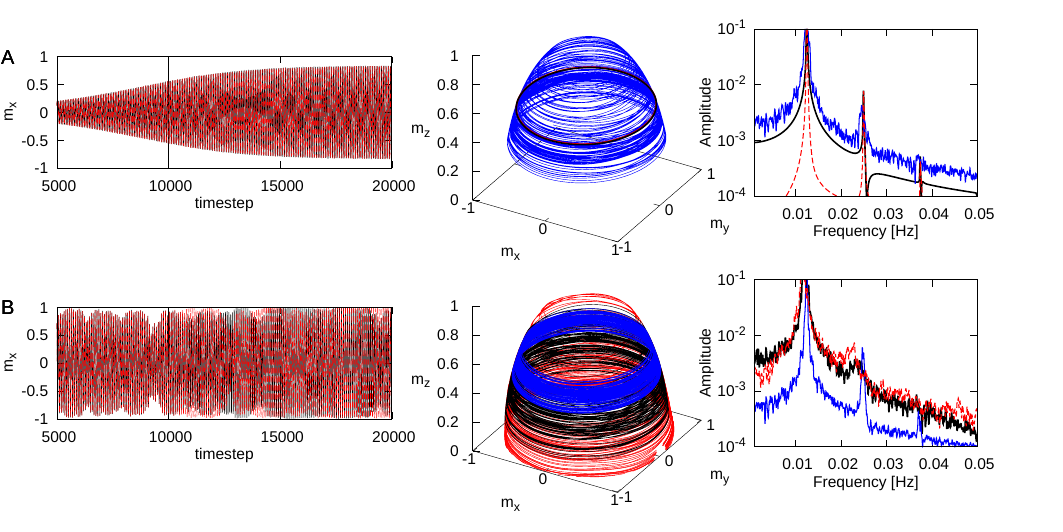}
\caption{
\label{fig:sto_parc}
Results of A. dynamic noise cancellation and B. noise-induced chaos reconstruction in the STO using PARC. 
Black, blue, and red lines denote the target, training, and reconstructed time series, respectively. 
The left panels show time series. 
The reservoir is switched from open-loop to closed-loop operation at time step 10,000. The middle panels show trajectories. The right panels show Fourier spectra. The parameters are set to $(\rho, \sigma, \varepsilon_{\rm train}, \varepsilon_{\rm target})=(0.1, 1.00, 3.85, 0.0)$ and $(0.1, 1.00, 0.891, 8.00)$ for A dynamic noise cancellation and B noise-induced chaos reconstruction, respectively.
}
\end{center}
\end{figure*}

Figure~\ref{fig:sto_parc} shows the results of dynamic noise cancellation and reconstruction of noise-induced bifurcations in the STO using PARC.
In Fig.~\ref{fig:sto_parc}A, the reservoir is trained on magnetization dynamics and current values at a noise intensity above the onset of noise-induced chaos, and we then attempt to reconstruct the dynamics under a constant current corresponding to zero noise intensity. 
The limit-cycle dynamics under constant current are reconstructed accurately, including both convergence to the attractor and the precise location of the attractor in phase space. In the dynamic noise-cancellation experiment described in the main text, the limit-cycle dynamics were reproduced, but the attractor position showed a slight discrepancy.
This suggests that, by providing the noisy current state separately from the magnetization in PARC, the reservoir is able to learn the effect of noise more accurately.

In Fig.~\ref{fig:sto_parc}B, the reservoir is trained at a relatively weak noise intensity around the onset of noise-induced chaos, and we then attempt to reconstruct the noise-induced chaotic dynamics at larger noise intensities. 
Although the precise location of the attractor is not fully reproduced, the reservoir reconstructs dynamics that spread more broadly in phase space than those seen during training, and the resulting Fourier spectrum is also close to that of the target dynamics.

